\newtheorem{theorem}{Theorem}
\newtheorem{lemma}[theorem]{Lemma}
\newtheorem{corollary}[theorem]{Corollary}
\begin{document}
\title{\Huge{Distributed Storage Allocations for Neighborhood-based Data Access}}


\author{Du$\breve{\mbox{s}}$an Jakoveti\'c, Aleksandar Minja, Dragana Bajovi\'c, and Dejan
Vukobratovi\'c
\thanks{D. Jakoveti\'c and D. Bajovi\'c are with University of Novi Sad, BioSense Center, Novi Sad, Serbia.
 A. Minja and D. Vukobratovi\'c
are with Department of Power, Electronics, and Communications Engineering,
University of Novi Sad, Novi Sad, Serbia. Authors' emails: {djakovet}@uns.ac.rs, sale.telekom@gmail.com, {[dbajovic,dejanv]}@uns.ac.rs.
}}

\maketitle

\begin{abstract}
We introduce a neighborhood-based data access model for distributed coded storage allocation. Storage nodes
 are connected in a generic network and data is accessed locally: a user
 accesses a randomly chosen storage node, which subsequently queries its neighborhood to recover the data object. We aim at finding an optimal allocation that minimizes the overall storage budget while ensuring recovery with probability one. We show that the problem reduces to finding the fractional dominating set of the underlying network.
   Furthermore, we develop a fully distributed algorithm where each storage node communicates only with its neighborhood in order to find its optimal storage allocation. The proposed algorithm is based upon the recently proposed proximal center method--an efficient dual decomposition based on accelerated dual gradient method.
 We show that our algorithm achieves a $(1+\epsilon)$-approximation ratio in $O(d_{\mathrm{max}}^{3/2}/\epsilon)$ iterations and per-node communications, where
 $d_{\mathrm{max}}$ is the maximal degree across nodes. Simulations demonstrate the effectiveness of the algorithm.
\end{abstract}

\vspace{-2mm}
\section{Introduction}
\label{section-intro}
With distributed (coded) storage allocation problems~\cite{Dimakis1,Dimakis2}, one aims to
 store a data object over $N$ storage nodes, such that the tradeoff between
 redundancy (total amount of storage) and reliability of accessing the object is balanced in an optimal way.
 A standard instance of the problem is as follows. Store a unit size data object~$D$ over $N$ nodes, such
  that each node $i$ stores an amount of $x_i$ storage encoded from object~$D$.
  At a later time, a data collector--user accesses a fixed number $r$, $r<N$,
  of randomly chosen nodes, and attempts to recover~$D$. Assuming a maximum distance separable~(MDS) coding is used, the recovery is successful if the overall amount of storage
  across the selected nodes is at least equal in size to~$D$. Then, the goal
  is to optimize the $x_i$'s such that the total storage
  $\sum_{i=1}^N x_i$ is minimized, while the recovery probability exceeds a prescribed level.

 In this paper--motivated by applications like cloud storage systems, peer-to-peer (P2P) networks, sensor networks or caching in small-cell cellular networks--we introduce a new, neighborhood-based data access model in the context of
 distributed coded storage allocation. We assume that the $N$ storage nodes
 are interconnected with links and constitute a generic network.
 A user (e.g., smartphone, sensor, P2P client application) accesses a randomly chosen node~$i$. Subsequently,
 node~$i$ contacts its one-hop neighbors, receives their coded storage,
 and, combined with its own storage, passes the aggregate storage to
 the user, which finally attempts the recovery.
 Then, our goal is to minimize $\sum_{i=1}^N x_i$ such that the recovery probability is one.

We show that the resulting problem is a fractional dominating set~(FDS) linear program~(LP), e.g.,~\cite{Kuhn}, and hence can be efficiently solved via standard LP solvers.
 However, we are interested in solving FDS
 in a \emph{fully distributed} way, whereby nodes over iterations exchange
 messages with their neighbors in the network with the aim of finding their optimal local allocations. Several
  distributed algorithms to solve FDS, and, more generally, fractional packing and covering LPs, have been proposed, ~\nocite{Luby,GlobalOptLocalInf,Stateless,MixedPackingCovering} e.g.,~\cite{Luby}-\cite{MixedPackingCovering}.
   However, \emph{advanced dual decomposition} techniques based on the Lagrangian dual, proved useful in many distributed applications, have not been sufficiently explored. In this paper, to solve FDS in a fully distributed way, we apply and modify the proximal center method proposed in~\cite{Necoara2}--an efficient dual decomposition based on accelerated Nesterov gradient method. We show that the resulting method is competitive with existing, primal-based approaches.
    Assuming that all nodes know $d_{\mathrm{max}}$ and $\epsilon$ beforehand ($d_{\mathrm{max}}$ the maximal degree across nodes, and $\epsilon$ the required accuracy), we show that the algorithm achieves a $(1+\epsilon)$-approximation ratio in~$O(d_{\mathrm{max}}^{3/2}/\epsilon)$ iterations~$k$--more precisely--$O(d_{\mathrm{max}}^{3/2}/\epsilon)$ per-node scalar communications
   and $O(d_{\mathrm{max}}^{5/2}/\epsilon)$ per-node elementary operations~(computational cost).
   This matches the best dependence on~$\epsilon$
   among existing \nocite{Luby,Bienstock,FasterSimpler,GlobalOptLocalInf,Stateless,MixedPackingCovering}
   solvers~\cite{Luby}-\cite{MixedPackingCovering},~\cite{GlobalOptLocalInf,FasterSimpler}, proposed for FDS (or fractional packing/covering).
   Furthermore, the algorithm's iterates are feasible (satisfy problem constraints) at all iterations.
   Simulations demonstrate that our algorithm converges much faster
   than a state-of-the-art distributed solver~\cite{Stateless} on moderate-size networks ($N=100-400$.)
    With respect to~\cite{Necoara2} (which considers generic convex programs), we introduce
here a novel, simple way of maintaining feasibility along iterations by exploiting the FDS problem structure.
 Further, exploiting structure, the results in~\cite{Necoara}, and primal-dual solution bounds
 derived here, we improve the dependence on the underlying network (on $N,d_{\mathrm{max}}$) with respect to a direct
 application of the generic results in~\cite{Necoara2}.

Summarizing, our main contributions are two-fold. First, we introduce a new, neighborhood-based data access model
for distributed coded storage allocation--the model well-suited in
many user-oriented applications for distributed networked storage--and we show that the corresponding optimization problem reduces to FDS.
Second, we solve the coded storage allocation problem (FDS) where storage nodes search for their optimal local allocations in a fully distributed way by applying and modifying the dual-based proximal center method. We establish the method's convergence and complexity guarantees and show that it compares favorably with a state-of-the art method on moderate-size networks.



\subsection{Literature review, paper organization, and notation}
We now briefly review the literature to further contrast our paper with
 existing work. Neighborhood-based data access has been
 previously considered in the context of replica placement, e.g.,~\cite{ReplicaPlacement}.
 Therein, one wants to replicate the \emph{raw object}~$D$
 across network such that it is reliably accessible through a neighborhood of any node.
  In contrast, we consider here the \emph{coded storage}.
   Mathematically, replica placement corresponds to the integer dominating set problem (known to be NP hard),
  while coded allocation studied here translates into FDS (solvable in polynomial time).

We now contrast
our distributed algorithm for FDS with existing methods.
%
  The literature usually considers more general fractional packing/covering
 problems, \nocite{Luby,Bienstock,FasterSimpler,GlobalOptLocalInf,Stateless,MixedPackingCovering}, e.g.,
  \cite{Luby}-\cite{MixedPackingCovering},~\cite{GlobalOptLocalInf,FasterSimpler}, and we hence specialize their results to FDS.\footnote{The constraint matrix $A$ is in our case square, $N \times N$, and it has $0/1$ entries,
  so that the width of the problem (largest entry of~$A$) is one.}
  References~\cite{Luby}-\cite{MixedPackingCovering}
  develop distributed algorithms, with the required number of iterations (per-node communications)
  that depends on~$\epsilon$
  (at least) as~$1/\epsilon^4$, and on $N$ as~$(\log N)^{O(1)}$ (poly-logarithmically).
     The algorithm in~\cite{Stateless} enjoys a stateless property (see~\cite{Stateless}
     for the definition of the property), while our algorithm
     is not stateless (e.g., it requires a global clock). References~\cite{Bienstock,FasterSimpler}
    develop algorithms with a better dependence on~$\epsilon$ than $1/\epsilon^4$. They are not
    concerned with developing fully distributed algorithms. The algorithm in~\cite{FasterSimpler}
     requires~$O(N \log N/\epsilon^2)$ iterations, while~\cite{Bienstock}
     takes~$O(\sqrt{N d_{\mathrm{max}} \log N}/\epsilon)$ iterations.
      In summary, among existing solvers~\cite{Luby}-\cite{MixedPackingCovering},~\cite{GlobalOptLocalInf,FasterSimpler},
      our algorithm matches the best dependence on~$\epsilon$, is fully distributed,
      can achieve arbitrary accuracy, is not stateless, and has in general worse dependence on~$N$.\footnote{Note that, for general networks, our algorithm's worst-case complexity is~$O(N^{3/2}/\epsilon)$.}


The remainder of the paper is organized as follows. The next paragraph
introduces notation. Section~\ref{section-allocations}
gives the system model with neighborhood-based data access and formulates
the distributed coded storage allocation problem as a FDS. Section~\ref{section-algorithm}
presents the {proximal center} distributed algorithm to solve FDS and states our results on its performance.
Section~\ref{section-proofs} gives the algorithm derivation and
proofs. Section~\ref{section-simulations} shows simulation examples. Finally, we conclude in Section~\ref{section-conclusion}.

Throughout, we use the following notation. We denote by: ${\mathbb R}^N$ the $N$-dimensional real space;
${\mathbb R}^N_+$ the set of $N$-dimensional vectors with non-negative entries;
 $a_i$ the $i$-th entry of vector $a$;
$A_{ij}$ the $(i,j)$-th entry of matrix~$A$; $0$ and $\mathbf{1}$
 the column vector with, respectively, zero and unit entries; $h_i$ the $i$-th canonical vector;%
 $\|\cdot\|=\|\cdot\|_2$ the Euclidean norm of a vector;
$\nabla g(y)$ the gradient at point $y$ of a differentiable function
 $g:\, {\mathbb R}^N \rightarrow \mathbb R$; $|S|$ cardinality of set $S$; and
 $\mathcal{I}_{\mathcal{E}}$ the indicator of event $\mathcal E$.
For two vectors $a, b \in {\mathbb R}^N$, the inequality
$a \leq b$ is understood component-wise. For a vector
 $a$, $[a]_+$ is a vector with the $i$-th entry equal to $\max\{0,a_i\}$ (Similarly,
 for a scalar $c$, $[c]_+ = \max\{0,c\}$.) Next,
 $\mathcal{P}_{[0,1]}(c)$ is the projection of scalar $c$ on the interval $[0,1]$, i.e.,
 $\mathcal{P}_{[0,1]}(c)=c$, for $c \in [0,1]$;
$\mathcal{P}_{[0,1]}(c)=0$, for $c < 0$; and
$\mathcal{P}_{[0,1]}(c)=1$, for $c>1.$
 Finally,
 for two positive sequences $\eta_n$ and $\chi_n$, $\eta_n = O(\chi_n)$ means that $\limsup_{n \rightarrow \infty}\frac{\eta_n}{\chi_n}<\infty$.

\vspace{-2mm}
\section{Problem model}
\label{section-allocations}
We consider coded distributed storage of a unit-size data object~$D$ over a network of
$N$ storage nodes. Each storage node $i$ stores a coded portion of $D$ of size $x_i$, $x_i \in [0,1]$.
For example, nodes can utilize random linear coding, where $D$ is divided into $M$
 disjoint parts; node $i$ stores $ x_i \times M $ random linear combinations
 of the parts of~$D$, e.g.,~\cite{Dimakis1} (ignoring
 the rounding of $ x_i \times M $ to closest integer). We assume that storage nodes
 constitute an arbitrary undirected network $\mathcal{G}=(\mathcal N, E)$,
 where $\mathcal N$ is the set of $N$ storage nodes, and
 $E$ is the set of communication links between them.
 Denote by $\Omega_i$ the one-hop closed neighborhood
 set of node $i$ (including~$i$),
 and by $d_i=|\Omega_i|$ its degree. Also,
 let $A$ be the $N \times N$ symmetric adjacency
 matrix associated with $\mathcal G$:
 $A_{ii}=1$, $\forall i$; and, for
 $i \neq j$, $A_{ij}=1$ if $\{i,j\} \in E$, and
 $A_{ij}=0$, else.

 A user accesses node~$i$ with probability $p_i>0$.
 Upon the user's request, node~$i$ contacts its neighbors
 $j \in \Omega_i \setminus \{i\}$,
 and they transmit their coded storage to $i$. Hence, afterwards,
 node $i$ has available the amount of storage
 equal to $\sum_{j \in \Omega_i} x_j$.
 If a MDS coding scheme is used, the recovery of object~$D$ is
 successful if $\sum_{j \in \Omega_i} x_j \geq 1$.
 Thus, the probability of successful recovery equals:
  $
  \sum_{i=1}^N p_i\,\mathcal{I}_{\{\sum_{j \in \Omega_i} x_j \geq 1\}}.
  $
 We aim at minimizing the total storage $\sum_{i=1}^N x_i$ such that
 the probability of recovery is one: $\sum_{i=1}^N p_i\,\mathcal{I}_{\{\sum_{j \in \Omega_i} x_j \geq 1\}}=1$.
  This translates into FDS, which, letting $x:=(x_1,...,x_N)^\top$, in compact form, can be written as:
\begin{equation}
\begin{array}[+]{ll}
\mbox{minimize} & {\mathbf 1}^\top x \\
\mbox{subject to} & A \, x \geq \mathbf 1 \\
                  & x \geq 0.
\end{array}
\label{LP}
\end{equation}
Clearly, \eqref{LP} has a non-empty constraint set (e.g., take $x=\mathbf 1$), and a solution exists.
Denote by $x^\star$ a solution to~\eqref{LP}.

In this paper, for simplicity, we focus on one-hop neighborhood data access model.
Our framework straightforwardly generalizes to $\ell$-hop neighborhood data access, $\ell >1$,
where a user attempts the recovery based on the $\ell$-hop
neighborhood of the queried node. Formally, in~\eqref{LP} we replace
$A$ with the adjacency matrix $A_{\ell}$ of graph $\mathcal{G}_{\ell}=(\mathcal N, E_{\ell})$,
where $E_{\ell}$ contains all pairs $\{i,j\}$ such
that there exists a path of length not greater than $\ell$ between them.

We are interested in developing a distributed, iterative algorithm,
where nodes exchange messages with their one-hop neighbors in the network,
so that the allocation $x$ produced by the algorithm
satisfies a $(1+\epsilon)$ approximation ratio: $\frac{1^\top x}{1^\top x^\star} \leq 1+ \epsilon$,
where $\epsilon>0$ is given beforehand.
  In this paper, we focus on how to determine the
(nearly optimal) \emph{amount} of coded storage $x_i$ at each node $i$.
 Once the amounts $x_i$'s are determined, in an actual implementation,
 nodes perform coding and actually \emph{store} the coded content; this is not considered here.\footnote{A simple way to
 achieve this, assuming each node $i$ knows $x_i$, is as follows.
 A data source passes the raw data object $D$ (partitioned into $M$ portions) to a randomly chosen node $i$.
 Then, node $i$ generates $x_i \times M$ random linear combinations, stores them, broadcasts
 the raw object $D$ to all its neighbors $j \in \Omega_i \setminus \{i\}$, and erases~$D$.
  Afterwards, each neighbor $j$ stores $x_j \times M$ random linear
 combinations, passes $D$ to all its neighbors unvisited so far, and erases~$D$.
 The process continues iteratively and terminates after all nodes have been visited; e.g.,
 it can stop after $N$ iterations.}

\vspace{-2mm}
\section{Distributed algorithm for coded storage allocation}
\label{section-algorithm}
%
%
%
%
\subsection{The algorithm}
\label{subsection-algorithm}
We apply the {proximal center} method~\cite{Necoara2} to solve the coded storage allocation problem~\eqref{LP}.
%
%
%
%
%
  The algorithm is based on the dual problem of a regularized version of~\eqref{LP},
 and on the Nesterov gradient algorithm.
 With respect to~\cite{Necoara2}, we choose the dual step-size differently; the step-size choice arises from the analysis here and in~\cite{Necoara}. Also, we modify the method to produce feasible primal updates at every iteration.
  (See Section~\ref{section-proofs} for the algorithm derivation.)

The algorithm is iterative, and all nodes operate in synchrony.
We denote the iterations by $k=0,1,2,...$ Each node $i$ maintains over iterations~$k$
its current (scalar) solution estimate $x_i^{(k)} $, where $x_i^{(k)} $ remains feasible to~\eqref{LP}, and the
auxiliary (scalar) variables: $\lambda_i^{(k)}$,
$\mu_i^{(k)}$, $\widehat{x}_i^{(k)}$, and $z_i^{(k)}$, and
$\overline{x}_j^{(k)}$, $\forall j \in \Omega_i.$ (Node~$i$ also has
$z_i^{(-1)}=0.$) 
  Let $1+\epsilon$ be the approximation ratio that
  nodes want to achieve, $\epsilon>0$.
   Our algorithm has parameters $\alpha,\delta>0$, set to:
   $\delta = \epsilon$,
   and $\alpha = \frac{\delta}{2(d_{\mathrm{max}}+1)^2}$.
   (See also Section~\ref{section-proofs}.)
    The algorithm is presented in Algorithm~1.
   We assume that all nodes know beforehand the quantities
    $d_{\mathrm{max}}$ and $\epsilon.$

       {\small{
\begin{algorithm}
\caption{Distributed algorithm for solving~\eqref{LP}}
\begin{algorithmic}[1]
{\small{
    \STATE (\textbf{Initialization}) Each node $i$ sets $k=0$,
    $\lambda_i^{(0)}=0$, $z_i^{(-1)}=0$,
    $\delta = \epsilon$,
   and $\alpha = \frac{\delta}{2(d_{\mathrm{max}}+1)^2}$.
        \STATE For $k=0,1,2,...$, perform steps $3$--$6$:
        \STATE If $k \geq 1$, each node $i$ transmits
        $\lambda_i^{(k)}$ to all $j \in {\Omega_i} \setminus \{i\}$,
        and receives $\lambda_j^{(k)}$, $\forall j \in {\Omega_i} \setminus \{i\}$ (if $k=0$ this step is skipped.)
        \STATE Each node $i$ computes:
        \begin{equation}
        \label{eqn-x-hat-i}
        \widehat{x}_i^{(k)} = \mathcal{P}_{[0,1]} \left( \frac{1}{\delta}(\sum_{j \in {\Omega_i}}\lambda_j^{(k)}-1)  \right)
        \end{equation}
        \STATE Each node $i$ transmits
        $\widehat{x}_i^{(k)}$ to all $j \in {\Omega_i} \setminus \{i\}$,
        and receives $\widehat{x}_j^{(k)}$, $\forall j \in {\Omega_i} \setminus \{i\}$.
        \STATE Each node $i$ computes:
        \begin{eqnarray}
        \label{eqn-z-i}
        z_i^{(k)} &=& z_i^{(k-1)} + \frac{k+1}{2} \left( 1-\sum_{j \in {\Omega_i}}\widehat{x}_j^{(k)} \right) \\
        \label{eqn-mu-i}
        \mu_i^{(k)} &=& \left[\, \lambda_i^{(k)} + \alpha \left( 1-\sum_{j \in {\Omega_i}}\widehat{x}_j^{(k)} \right) \, \right]_+ \\
        \label{eqn-lambda-i}
        \lambda_i^{(k+1)} &=& \frac{k+1}{k+3}  \,\mu_i^{(k)} + \frac{2}{k+3} \alpha \left[ z_i^{(k)} \right]_+ \\
        \label{eqn-x-bar-i}
        \overline{x}_j^{(k)} &=& \frac{k}{k+2} \overline{x}_j^{(k-1)} + \frac{2}{k+2} \widehat{x}_j^{(k)}, \, \forall j \in {\Omega_i} \\
        \label{eqn-x-i}
        x_i^{(k)} &=& \overline{x}_i^{(k)} + \left[ \,    1-\sum_{j \in {\Omega_i}}\overline{x}_j^{(k)}    \,  \right]_+
        \end{eqnarray}
}}
\end{algorithmic}
\vspace{-0mm}
\end{algorithm}}}
%

We can see that, with Algorithm~1, each node $i$: 1) performs two broadcast, scalar transmissions
to all neighbors, per $k$; 2) maintains $O(d_i)$ scalars over iterations~$k$ in its memory;
and 3) performs $O(d_i)$ floating point operations per~$k$. Here, $d_i$ is the degree of node~$i$.

When the generalized, $\ell$-neighborhood based data access is considered,
Algorithm~1 generalizes in a simple way: the structure remains the same, except that
the one-hop neighborhood $\Omega_i$ is replaced with the $\ell$-hop neighborhood in all
steps of the algorithm. Physically, this translates into requiring that nodes exchange messages
with all their $\ell$-hop neighbors during execution.

\subsection{Performance guarantees}
\label{subsection-guarantees}
We now present our results on the convergence and convergence rate of
Algorithm~1. We establish the following Theorem, proved in Section~\ref{section-proofs}.
\begin{theorem}
\label{theorem-1}
Consider Algorithm~1 with arbitrary $\delta>0$ and $\alpha = \frac{\delta}{2(d_{\mathrm{max}}+1)^2}$.
The iterates $x^{(k)}$
 are feasible to~\eqref{LP}, $\forall k=0,1,...$, and,
 for any solution $x^\star$ of~\eqref{LP}, $\forall k=0,1,...,$ there holds:
 {\small{
 \begin{eqnarray}
 \frac{  {\mathbf 1}^\top x^{(k)} - {\mathbf 1}^\top x^\star } {{\mathbf 1}^\top x^\star}
 &\leq&
 32 (d_{\mathrm{max}}+1)^3 (1+1/\delta)\, \frac{1}{(k+1)^2} \nonumber \\
 &+& \delta/2. \label{eqn-theorem-1}
 \end{eqnarray}}}
\end{theorem}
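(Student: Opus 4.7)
My plan splits the argument into \emph{feasibility}, which is immediate from the explicit correction in (\ref{eqn-x-i}), and the \emph{approximation rate}, which arises by recognizing Algorithm~1 as a node-wise instance of the accelerated dual-gradient (proximal center) method applied to a quadratically regularized version of (\ref{LP}).

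\textbf{Feasibility.} Fix node $i$ and set $s_j := \sum_{\ell \in \Omega_j} \overline{x}_\ell^{(k)}$. Each $\widehat{x}_j^{(k)} \in [0,1]$ by (\ref{eqn-x-hat-i}), and (\ref{eqn-x-bar-i}) shows $\overline{x}_j^{(k)}$ to be a convex combination of such iterates, hence also in $[0,1]$; adding the nonnegative term $[1-s_i]_+$ keeps $x_i^{(k)} \geq 0$. For the covering constraint,
\begin{equation*}
\sum_{j \in \Omega_i} x_j^{(k)} \;=\; s_i + \sum_{j \in \Omega_i}[1-s_j]_+ \;\geq\; s_i + [1-s_i]_+ \;=\; \max(s_i,1) \;\geq\; 1,
\end{equation*}
so $A x^{(k)} \geq \mathbf 1$ for every $k$, independently of the dual sequence.

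\textbf{Dual-gradient interpretation and rate.} I would introduce the smoothed primal $\min_{x \in [0,1]^N,\,Ax \geq \mathbf 1} \mathbf 1^\top x + (\delta/2)\|x\|^2$ and its Lagrangian dual $\max_{\lambda \geq 0} g(\lambda)$. The inner minimization in $g$ decouples coordinate-wise and reproduces exactly the closed form $\widehat{x}_i^{(k)}$ of (\ref{eqn-x-hat-i}), so that $\nabla g(\lambda) = \mathbf 1 - A\widehat{x}(\lambda)$ is $L$-Lipschitz with $L \leq \|A\|^2/\delta \leq (d_{\max}+1)^2/\delta$, using that the closed-neighborhood adjacency has row and column sums at most $d_{\max}+1$. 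The step-size $\alpha = \delta/(2(d_{\max}+1)^2) \leq 1/L$ is the standard Nesterov choice; (\ref{eqn-z-i})--(\ref{eqn-lambda-i}) implement the two Nesterov extrapolations at the node level (with $z_i^{(k)}$ the cumulative-gradient variable), while (\ref{eqn-x-bar-i}) is the ergodic primal-recovery averaging. The analysis of \cite{Necoara2,Necoara} then simultaneously yields a $O\!\bigl(L\|\lambda^\star_\delta\|^2/(k+1)^2\bigr)$ bound on the regularized primal suboptimality at $\overline{x}^{(k)}$ and on the infeasibility residual $\|[\mathbf 1 - A\overline{x}^{(k)}]_+\|$.

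\textbf{Assembly.} Three ingredients close the loop. (i) Strong LP duality for (\ref{LP}) gives $\mathbf 1^\top \lambda^\star = \mathbf 1^\top x^\star$, while dual feasibility $A^\top \lambda^\star \leq \mathbf 1$ forces $\|\lambda^\star\|_\infty \leq 1$, so $\|\lambda^\star\|_2^2 \leq \|\lambda^\star\|_\infty \|\lambda^\star\|_1 \leq \mathbf 1^\top x^\star$; a perturbation estimate transfers this to $\|\lambda^\star_\delta\|^2 = O\!\bigl((d_{\max}+1)\,\mathbf 1^\top x^\star\bigr)$. (ii) The correction cost in (\ref{eqn-x-i}) equals $\|[\mathbf 1 - A\overline{x}^{(k)}]_+\|_1$ and inherits the rate of the infeasibility residual, providing the extra $(d_{\max}+1)$ factor and the term $1$ inside $1+1/\delta$. (iii) The regularization shifts the optimum by at most $(\delta/2)\|x^\star\|^2 \leq (\delta/2)\,\mathbf 1^\top x^\star$ (since $x^\star \in [0,1]^N$), producing the additive $\delta/2$ in (\ref{eqn-theorem-1}) after dividing through by $\mathbf 1^\top x^\star$.

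\textbf{Main obstacle.} The delicate step is to pull all problem-dependent constants down to $d_{\max}+1$ rather than the ambient dimension $N$. A black-box application of \cite{Necoara2} would produce $N$-dependent prefactors; obtaining the announced $(d_{\max}+1)^3$ dependence requires opening the KKT/duality argument in (i), and then carefully tracking in (ii) how the $\ell_1$ infeasibility is absorbed by the $(d_{\max}+1)$-sparsity of the rows of $A$, so that each local violation is charged only to the single one-hop neighborhood that produced it.
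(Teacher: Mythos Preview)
Your feasibility argument, the identification of Algorithm~1 with Nesterov's method on the smoothed dual, the Lipschitz constant $L=(d_{\max}+1)^2/\delta$, the use of Necoara's primal-recovery bounds (Lemma~\ref{lemma-Necoara} here), and the regularization shift in your step~(iii) are all exactly as in the paper.

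The divergence is in how the $N$-dependence is removed. You propose to avoid $N$ entirely by (i) bounding $\|\lambda^\bullet\|^2$ in terms of $\mathbf 1^\top x^\star$ via LP duality plus a perturbation argument, and (ii) using the $(d_{\max}+1)$-sparsity of the rows of $A$ to control $\|e^{(k)}\|_1$. The paper does neither. It takes the crude route: it bounds $\|\lambda^\bullet\|\le 2\sqrt{N}(1+\delta)$ directly from the KKT condition $A\lambda^\bullet\le\mathbf 1+\delta x^\bullet$ of the \emph{regularized} problem (no perturbation from the LP dual is needed), and it bounds $\mathbf 1^\top e^{(k)}\le\sqrt{N}\,\|e^{(k)}\|_2$ with no sparsity trick. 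The resulting factor of $N$ in the numerator is then cancelled against the FDS lower bound $\mathbf 1^\top x^\star\ge N/(d_{\max}+1)$ (Lemma~\ref{lemma-properties-of-LPs}, inequality~\eqref{eqn-bounds-on-sum-x}), which is the one problem-specific ingredient you do not invoke.

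Your route may be salvageable, but as written it has genuine gaps: the ``perturbation estimate'' transferring $\|\lambda^\star\|^2\le\mathbf 1^\top x^\star$ to the regularized dual optimum $\lambda^\bullet$ is asserted without proof (and it is not clear it holds with the claimed $(d_{\max}+1)$ constant), and the sparsity argument in~(ii) is not a real mechanism---Necoara's bound is on $\|e^{(k)}\|_2$, and row-sparsity of $A$ does not by itself convert an $\ell_2$ bound on $[\mathbf 1-A\overline{x}]_+$ into an $\ell_1$ bound without the generic $\sqrt{N}$ loss. The paper's device of letting $\sqrt{N}$ appear twice and then dividing by $\mathbf 1^\top x^\star\ge N/(d_{\max}+1)$ is both simpler and what actually produces the exact prefactor $32(d_{\max}+1)^3(1+1/\delta)$ in~\eqref{eqn-theorem-1}.
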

An immediate corollary of Theorem~\ref{theorem-1} is the following result. It can be easily proved
by setting both summands on the right hand side of \eqref{eqn-theorem-1} to~$\epsilon/2$.
\begin{corollary} Algorithm~1  with $\delta=\epsilon$ and $\alpha = \frac{\delta}{2(d_{\mathrm{max}}+1)^2}$
achieves the $(1+\epsilon)$-approximation ratio:
$\frac{  {\mathbf 1}^\top x^{(K_{\epsilon})} - {\mathbf 1}^\top x^\star } {{\mathbf 1}^\top x^\star} \leq \epsilon$ in $K_{\epsilon}=O\left( \frac{d_{\mathrm{max}}^{3/2}}{\epsilon} \right)$ iterations.
\end{corollary}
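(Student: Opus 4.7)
The plan is to apply Theorem~\ref{theorem-1} directly with the prescribed parameter choices and then solve for the iteration count that makes the right-hand side of~\eqref{eqn-theorem-1} bounded by $\epsilon$. Since Theorem~\ref{theorem-1} already guarantees feasibility of $x^{(k)}$ for every $k$, only the approximation-ratio bound needs to be analyzed.

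With $\delta=\epsilon$, the second summand on the right-hand side of~\eqref{eqn-theorem-1} equals $\epsilon/2$, so I need to choose $k$ large enough to force the first summand below $\epsilon/2$ as well. Substituting $\delta=\epsilon$ into the first summand gives $32(d_{\mathrm{max}}+1)^3(1+1/\epsilon)/(k+1)^2$. Requiring this quantity to be at most $\epsilon/2$ yields
\begin{equation*}
(k+1)^2 \;\geq\; \frac{64\,(d_{\mathrm{max}}+1)^3\,(1+1/\epsilon)}{\epsilon}.
\end{equation*}
Taking square roots and using $(1+1/\epsilon) \leq 2/\epsilon$ for $\epsilon \leq 1$ (the regime of interest), I obtain
\begin{equation*}
k+1 \;\geq\; 8\sqrt{2}\,(d_{\mathrm{max}}+1)^{3/2}\,\sqrt{2/\epsilon^2} \;=\; 16\,(d_{\mathrm{max}}+1)^{3/2}/\epsilon,
\end{equation*}
which establishes $K_\epsilon = O(d_{\mathrm{max}}^{3/2}/\epsilon)$. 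For the large-$\epsilon$ regime, the stated approximation ratio is trivially attainable (e.g.\ by the all-ones allocation), so the bound is dominated by the small-$\epsilon$ analysis.

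There is essentially no obstacle here: once Theorem~\ref{theorem-1} is in hand, the corollary reduces to equating the two error terms and inverting the resulting inequality. The only care needed is to verify that the asymptotic absorption $(1+1/\epsilon) = O(1/\epsilon)$ is legitimate, which holds for any $\epsilon$ bounded above by a constant; larger $\epsilon$ corresponds to a weaker approximation requirement and is trivially satisfied after $O(1)$ iterations. Hence the final complexity $K_\epsilon = O(d_{\mathrm{max}}^{3/2}/\epsilon)$ follows.
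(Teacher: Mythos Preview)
Your proof is correct and follows exactly the approach sketched in the paper: set $\delta=\epsilon$ so the second summand in~\eqref{eqn-theorem-1} equals $\epsilon/2$, then bound the first summand by $\epsilon/2$ and solve for $k$. Aside from a harmless constant-factor slip in the intermediate arithmetic (the bound is $8\sqrt{2}\,(d_{\mathrm{max}}+1)^{3/2}/\epsilon$ rather than $16(d_{\mathrm{max}}+1)^{3/2}/\epsilon$), the argument and the resulting $O(d_{\mathrm{max}}^{3/2}/\epsilon)$ complexity are identical to the paper's.
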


\vspace{-2mm}
\section{Algorithm derivation and analysis}
\label{section-proofs}
%
%
\subsection{Algorithm derivation}
\label{subsection-alg-derivation}
In this Subsection, we explain how Algorithm~1 is derived.
 A derivation for generic cost and prox functions can be found in~\cite{Necoara2}, but we
include the derivation specific to~\eqref{LP} for completeness.
We apply the Nesterov gradient method in the Lagrangian dual domain.
We first add the constraint $x \leq {\mathbf 1}$ in~\eqref{LP}.
Note that this can be done without changing the solution set. Next,
we introduce the
$l_2$-regularization, by adding the term $\frac{\delta}{2}\|x\|^2$ to the cost function.
 The regularization allows for certain nice properties of the
 Lagrangian dual, e.g., Lipschitz continuous gradient od the dual function.
 Hence, we consider the regularized problem:
\begin{equation}
\begin{array}[+]{ll}
\mbox{minimize} & {\mathbf 1}^\top x + \frac{\delta}{2} \|x\|^2\\
\mbox{subject to} & A \, x \geq \mathbf 1 \\
                  & 0 \leq x \leq \mathbf{1}.
\end{array}
\label{eqn-LP-regularized}
\end{equation}
By dualizing the constraint $\mathbf 1 - Ax \leq 0$,
we obtain the dual function:
 $\mathcal{D}: \,{\mathbb R}^N_+ \rightarrow {\mathbb R}^N$:
 \begin{equation}
 \label{eqn-dual-fcn}
 \mathcal{D}(\lambda)  = \min_{0 \leq x \leq \mathbf 1}
 \left\{  \mathbf{1}^\top x + \frac{\delta}{2}\|x\|^2 + \lambda^\top \left( \mathbf 1 - Ax\right)    \right\}.
 \end{equation}
The dual problem is then to maximize $\mathcal{D}(\lambda)$ over $\lambda \in {\mathbb R}^N_+$.

We apply the Nesterov gradient method on the dual function with zero initialization;
set $\lambda^{(0)}=0$, and, for $k=0,1,...,$ perform:
{\small{
\begin{eqnarray}
\label{eqn-alg-nesterov}
\mu^{(k)} &=&  \left[  \lambda^{(k)} + \alpha\, \nabla \mathcal{D}(\lambda^{(k)})\right]_+ \\
\lambda^{(k+1)}  &=&  \frac{k+1}{k+3}\mu^{(k)} + \frac{2}{k+3} \alpha \left[\sum_{s=0}^{k} \frac{(s+1)}{2}\nabla \mathcal{D}(\lambda^{(s)})\right]_+ \nonumber
\end{eqnarray}}}
Next, it can be shown that, for any $\lambda \in {\mathbb R}^N_+$, $\nabla \mathcal{D}(\lambda)
= 1-A \widehat{x}(\lambda)$, where:
\begin{eqnarray*}
\widehat{x}(\lambda) = \mathrm{arg\,min}_{0 \leq x \leq \mathbf 1}
 \left\{ \mathbf{1}^\top x + \frac{\delta}{2}\|x\|^2 + \lambda^\top \left( \mathbf 1 - Ax\right)   \right\}.
\end{eqnarray*}
It is easy to show that $\widehat{x}(\lambda) = (\widehat{x}_1(\lambda),...,\widehat{x}_N(\lambda))^\top$
 admits a closed form solution, with:
 \begin{eqnarray}
 \label{eqn-x-i-value}
 \widehat{x}_i(\lambda) = \mathcal{P}_{[0,1]} \left( \frac{1}{\delta}(\sum_{j \in {\Omega_i}}\lambda_j-1) \right),\,\,\forall i.
 \end{eqnarray}
 Next, note that $z^{(k)} = (z_1^{(k)},...,z_N^{(k)})^\top $ in~\eqref{eqn-z-i}
 is a recursive implementation of the sum $\sum_{s=0}^{k} \frac{(s+1)}{2}\nabla \mathcal{D}(\lambda^{(s)})$.
  Also, note that $\widehat{x}^{(k)}$ in~\eqref{eqn-x-hat-i} equals $\widehat{x}(\lambda^{(k)})$.
  Hence, we have derived the updates \eqref{eqn-x-hat-i}, \eqref{eqn-z-i},
  \eqref{eqn-mu-i}, and \eqref{eqn-lambda-i}, for $\widehat{x}^{(k)}$, $z^{(k)}$,
  $\mu^{(k)}$, and $\lambda^{(k)}$, respectively. It remains to explain the updates
  for $\overline{x}^{(k)}$ and $x(k)$. Regarding the quantity $\overline{x}^{(k)}$,
  we introduce it, as reference~\cite{Necoara} demonstrates that good optimality guarantees can be obtained for $\overline{x}^{(k)}$ (while such guarantees
  may not be achieved for $\widehat{x}^{(k)}$.) Finally, as
  $\overline{x}^{(k)}$ may be infeasible for~\eqref{LP} at certain iterations,
  we introduce $x^{(k)}$ in~\eqref{eqn-x-i} that are feasible by construction (See also the proof of this in Subsection~\ref{subsection-proofs}.)
  This completes the derivation.

\subsection{Auxiliary results and proof of Theorem~1}
\label{subsection-proofs}
%
%
%

We first derive certain  properties of~\eqref{LP} and~\eqref{eqn-LP-regularized}.
  Recall that the network maximal degree is $d_{\mathrm{max}}$ and let $d_{\mathrm{min}}$ be the minimal degree.
 Further, let $x^{\bullet}$ be the solution to~\eqref{eqn-LP-regularized}, and
 ${\lambda}^{\bullet}$ be an arbitrary solution to the dual of~\eqref{eqn-LP-regularized}
(maximization of $\mathcal{D}(\lambda)$ over $\lambda \in {\mathbb R}^N_+$).
 Also, recall that $x^\star$ is an arbitrary solution to~\eqref{LP}. We have the following Lemma.
\begin{lemma}
\label{lemma-properties-of-LPs}
There holds:
{\allowdisplaybreaks{
\begin{eqnarray}
\label{eqn-bounds-on-sum-x}
\frac{N}{d_{\mathrm{max}}+1} &\leq& {\mathbf 1}^\top x^\star \leq \frac{N}{d_{\mathrm{min}}+1} \\
\label{eqn-bounds-on-x-norm}
\|x^\bullet\| &\leq&  \|x^\star\| \leq \sqrt{N} \\
\label{eqn-bounds-on-lambda-norm}
\|\lambda^\bullet\| &\leq&  2 \delta \|x^\bullet\| + 2 \sqrt{N} \\
\label{eqn-bounds-4}
{\mathbf 1}^\top x^\bullet + \frac{\delta}{2} \|x^\bullet\|^2 &\leq& (1+\delta/2){\mathbf 1}^\top x^\star .
\end{eqnarray}}}
\end{lemma}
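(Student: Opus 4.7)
The plan is to establish the four inequalities in the listed order, each via a short primal--dual argument tailored to the structure of~\eqref{LP} and~\eqref{eqn-LP-regularized}.

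For~\eqref{eqn-bounds-on-sum-x}, I sum the $N$ row constraints of~\eqref{LP}: using symmetry of $A$, $\sum_{i=1}^N (Ax^\star)_i = \sum_j |\Omega_j|\, x_j^\star \geq N$; bounding $|\Omega_j|$ uniformly by $d_{\max}+1$ gives the lower bound. Conversely, $x = \mathbf{1}/(d_{\min}+1)$ is feasible, since each constraint evaluates to $|\Omega_i|/(d_{\min}+1) \geq 1$, and its cost is $N/(d_{\min}+1)$, which establishes the upper bound.

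For~\eqref{eqn-bounds-on-x-norm}, I first reduce to an optimum with $x^\star \in [0,1]^N$: if some $x_j^\star > 1$, then for every $i \in \Omega_j$ the constraint $(Ax^\star)_i$ is at least $x_j^\star$, so it has slack at least $x_j^\star - 1 > 0$; clipping $x_j^\star$ down to $1$ keeps feasibility and strictly reduces the cost, contradicting optimality. Then $\|x^\star\|^2 \leq \sum_i x_i^\star \leq N$. For $\|x^\bullet\| \leq \|x^\star\|$, I combine two optimality inequalities: $x^\star$ is feasible for~\eqref{eqn-LP-regularized}, so $\mathbf{1}^\top x^\bullet + \tfrac{\delta}{2}\|x^\bullet\|^2 \leq \mathbf{1}^\top x^\star + \tfrac{\delta}{2}\|x^\star\|^2$; and $x^\bullet$ is feasible for~\eqref{LP}, so $\mathbf{1}^\top x^\star \leq \mathbf{1}^\top x^\bullet$. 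Subtracting cancels the linear terms and leaves $\|x^\bullet\|^2 \leq \|x^\star\|^2$.

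The dual bound~\eqref{eqn-bounds-on-lambda-norm} is where I expect the main obstacle. The key observation is that the upper box $x \leq \mathbf{1}$ is redundant in~\eqref{eqn-LP-regularized} (by the same clipping argument), so I select $\lambda^\bullet$ to be optimal for the dual of the relaxation that retains only $Ax \geq \mathbf{1}$ and $x \geq 0$; by strong duality this choice remains optimal for the original dual as well. With this $\lambda^\bullet$, the inner minimization in the dual function gives the closed form $\delta x^\bullet = [A^\top \lambda^\bullet - \mathbf{1}]_+$ component-wise, so $(A^\top \lambda^\bullet)_i \leq 1 + \delta x_i^\bullet$ for every $i$. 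Because $A_{ii}=1$ and $A,\lambda^\bullet$ have nonnegative entries, $\lambda_i^\bullet \leq (A^\top \lambda^\bullet)_i \leq 1 + \delta x_i^\bullet$; squaring, summing, and applying $\sqrt{a^2+b^2} \leq a+b$ for $a,b \geq 0$ yields $\|\lambda^\bullet\| \leq \sqrt{2N} + \sqrt{2}\,\delta\|x^\bullet\|$, which implies the stated bound. The subtle point is precisely this dual selection: the inequality $(A^\top \lambda^\bullet)_i \leq 1 + \delta x_i^\bullet$ can fail for an arbitrary dual optimum at indices with $x_i^\bullet = 1$, so exploiting redundancy of the upper box is essential.

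Finally,~\eqref{eqn-bounds-4} follows in one line: feasibility of $x^\star$ for~\eqref{eqn-LP-regularized} and optimality of $x^\bullet$ give $\mathbf{1}^\top x^\bullet + \tfrac{\delta}{2}\|x^\bullet\|^2 \leq \mathbf{1}^\top x^\star + \tfrac{\delta}{2}\|x^\star\|^2$; since $x_i^\star \in [0,1]$ implies $(x_i^\star)^2 \leq x_i^\star$, we have $\|x^\star\|^2 \leq \mathbf{1}^\top x^\star$, so the right-hand side is bounded by $(1+\delta/2)\,\mathbf{1}^\top x^\star$.
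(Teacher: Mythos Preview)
Your argument is correct and tracks the paper's proof closely for~\eqref{eqn-bounds-on-sum-x},~\eqref{eqn-bounds-on-x-norm}, and~\eqref{eqn-bounds-4}; the only substantive difference is that you supply a self-contained counting argument for the lower bound in~\eqref{eqn-bounds-on-sum-x}, whereas the paper cites an external lemma, and you justify $x^\star\in[0,1]^N$ via the clipping argument, which the paper takes for granted.

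For~\eqref{eqn-bounds-on-lambda-norm} the two proofs reach the same key entrywise inequality $A\lambda^{\bullet}\le \mathbf 1+\delta x^{\bullet}$ but by slightly different routes. The paper invokes the KKT (variational) condition for~\eqref{eqn-LP-regularized} directly and reads off $(A\lambda^{\bullet})_i\le 1+\delta x_i^{\bullet}$ by testing against $y=h_i$. You instead observe that the upper box $x\le\mathbf 1$ is redundant, pass to the relaxed problem, and obtain the same inequality from the closed-form inner minimizer $\delta x^{\bullet}=[A\lambda^{\bullet}-\mathbf 1]_+$. Your route has the virtue of explicitly handling the boundary case $x_i^{\bullet}=1$: there the box-constrained variational inequality only yields $(A\lambda^{\bullet})_i\ge 1+\delta$, so the paper's argument, read literally for an \emph{arbitrary} dual optimizer, is incomplete at such coordinates. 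By selecting the dual optimizer of the relaxed problem (which, as you note, remains optimal for the original dual), you sidestep this and obtain the bound for a specific $\lambda^{\bullet}$, which is all that is needed downstream in Lemma~\ref{lemma-Necoara}.
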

Note that~\eqref{eqn-bounds-4} means that $x^\bullet$ is a
$(1+\delta/2)$-approximate solution to~\eqref{LP}.
%
%
%
%

\begin{proof}
We first prove~\eqref{eqn-bounds-on-sum-x}. The lower bound~\eqref{eqn-bounds-on-sum-x} follows from Lemma~{4.1} in~\cite{Kuhn}.
The upper bound follows by noting that $x_i=1/(d_{\mathrm{min}}+1)$, $\forall i$, is feasible to~\eqref{LP}.
We now prove~\eqref{eqn-bounds-on-x-norm} and~\eqref{eqn-bounds-on-lambda-norm}. The right inequality in~\eqref{eqn-bounds-on-x-norm}
 holds because $x_i^\star \leq 1$, $\forall i$.
For the left inequality, note that
\begin{equation}
\label{eqn-aux}
\mathbf 1^\top x^\bullet + \frac{\delta}{2} \|x^\bullet\|^2
\leq
\mathbf 1^\top x^\star + \frac{\delta}{2} \|x^\star\|^2,
\end{equation}
 as $x^\bullet$ is the solution to~\eqref{eqn-LP-regularized}. 
 The left inequality in~\eqref{eqn-bounds-on-x-norm}
 now follows combining the latter with $\mathbf 1^\top x^\star \leq \mathbf 1^\top x^\bullet$,
 which holds as $x^\star$ is a solution to~\eqref{LP}. To prove~\eqref{eqn-bounds-on-lambda-norm},
 we use the Karush-Kuhn-Tucker conditions associated with~\eqref{eqn-LP-regularized}. In particular,
 they imply that:
$
(\mathbf 1 + \delta x^\bullet - A \lambda^\bullet )^\top y \geq 0,
$
 for all $0 \leq y \leq \mathbf 1$. Taking $y=h_i$, $i=1,...,N$, we get:
 $A \lambda^\bullet \leq \mathbf 1 + \delta x^\bullet$, from which the desired claim follows.
 Finally, we prove~\eqref{eqn-bounds-4}. We have:
 $
 \mathbf 1^\top x^\star + \frac{\delta}{2} \|x^\star\|^2 $ $=
 \sum_{i=1}^N x_i^\star \left( 1+\frac{\delta}{2}x_i^\star\right)$
 $\leq
 \left( 1+\frac{\delta}{2}\right)\mathbf 1^\top x^\star,
 $
where the inequality holds because $x_i^\star \in [0,1]$, $\forall i$.
Combining the latter with~\eqref{eqn-aux}, the result follows.
\end{proof}
Consider the dual function~$\mathcal{D}(\lambda)$ in~\eqref{eqn-dual-fcn}.
An important condition for~\eqref{eqn-alg-nesterov} (and hence for Algorithm~1) to work is that
the step size $\alpha$ be chosen in accordance with the Lipschitz constant of~$\nabla \mathcal{D}(\lambda)$.
   From Theorem~{3.1} in~\cite{Necoara2}, it follows that $\nabla \mathcal{D}(\lambda)$ is Lipschitz continuous with constant $L=\frac{(d_{\mathrm{max}}+1)^2}{\delta}$, i.e.,
  for all $\lambda^a,\lambda^b \in {\mathbb R}^N_+$:
  $
  \| \nabla \mathcal{D}(\lambda^a) - \nabla \mathcal{D}(\lambda^b) \| $
  $\leq L \, \|\lambda^a - \lambda^b\|.
  $
  We now borrow Theorems 2.9 and 2.10 in~\cite{Necoara} and adapt them to our setting.
   Denote by $e_i^{(k)}:= \left[ 1-\sum_{j \in {\Omega_i}} \overline{x}_j^{(k)} \right]_+$,
   and $e^{(k)}:=(e^{(k)}_1,...,e^{(k)}_N)^\top$.
  \begin{lemma}[\cite{Necoara}, Theorems 2.9 and 2.10]
  \label{lemma-Necoara}
  Consider Algorithm~1 with arbitrary $\delta>0$ and
  $\alpha = \frac{\delta}{2(d_{\mathrm{max}}+1)^2}$. Then, for $k=0,1,...$:
  \begin{eqnarray}
  \label{eqn-lemma-1}
  &\,& {\mathbf 1}^\top \overline{x}^{(k)}
  +
  \frac{\delta}{2}\|\overline{x}^{(k)} \|^2
  \leq
  {\mathbf 1}^\top x^\bullet
  +
  \frac{\delta}{2}\|x^\bullet \|^2 \\
  \label{eqn-lemma-2}
  &\,&
  \|e^{(k)}\| \leq \frac{16 L \|\lambda^\bullet\|}{(k+1)^2},
  \end{eqnarray}
  where $L=\frac{(d_{\mathrm{max}}+1)^2}{\delta}$.
  \end{lemma}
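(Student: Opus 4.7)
The plan is to adapt the convergence analysis of the accelerated dual gradient method of \cite{Necoara} to our regularized program~\eqref{eqn-LP-regularized}. The groundwork is already in place: the dual function $\mathcal{D}$ in~\eqref{eqn-dual-fcn} has Lipschitz continuous gradient with constant $L=(d_{\mathrm{max}}+1)^2/\delta$, and the step size $\alpha=1/(2L)$ matches what the accelerated scheme requires. I would first identify the dual iterates $\lambda^{(k)}$, $\mu^{(k)}$, $z^{(k)}$ produced by Algorithm~1 as one standard parametrization of Nesterov's accelerated method applied to $-\mathcal{D}$, warm-started at $\lambda^{(0)}=0$.

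Unrolling~\eqref{eqn-x-bar-i} yields the closed form $\overline{x}^{(k)} = \sum_{s=0}^{k} \frac{2(s+1)}{(k+1)(k+2)} \widehat{x}^{(s)}$, so $\overline{x}^{(k)}$ is a convex combination of the primal reconstructions. Letting $f(x):=\mathbf{1}^\top x+\frac{\delta}{2}\|x\|^2$, convexity gives $f(\overline{x}^{(k)}) \leq \sum_{s=0}^{k}\frac{2(s+1)}{(k+1)(k+2)} f(\widehat{x}^{(s)})$. Since $\widehat{x}^{(s)}$ is the Lagrangian minimizer at $\lambda^{(s)}$, we also have $\mathcal{D}(\lambda^{(s)})=f(\widehat{x}^{(s)})+(\lambda^{(s)})^\top(\mathbf{1}-A\widehat{x}^{(s)})$, which after weighting and summing relates $f(\overline{x}^{(k)})$ to a weighted average of dual values plus a weighted constraint residual.

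For~\eqref{eqn-lemma-1}, I would deploy Nesterov's estimate-sequence inequality in the dual to obtain
\[
\tfrac{(k+1)(k+2)}{4}\bigl[\mathcal{D}(\lambda^\bullet)-\mathcal{D}(\lambda^{(k)})\bigr] \;\leq\; \tfrac{1}{4\alpha}\|\lambda^\bullet\|^2,
\]
and combine it with the convex-combination bound above, strong duality $\mathcal{D}(\lambda^\bullet)=f(x^\bullet)$, and complementary slackness $(\lambda^\bullet)^\top(\mathbf{1}-Ax^\bullet)=0$. After cancellation of the quadratic and cross terms, this yields $f(\overline{x}^{(k)})\leq f(x^\bullet)$, which is~\eqref{eqn-lemma-1}; the \emph{upper} bound (rather than the usual lower bound on primal error) is possible precisely because $\overline{x}^{(k)}$ need not satisfy $A\overline{x}^{(k)}\geq\mathbf{1}$. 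For~\eqref{eqn-lemma-2}, the same estimate-sequence inequality controls the weighted constraint residual $\mathbf{1}-A\overline{x}^{(k)}=\sum_{s}\frac{2(s+1)}{(k+1)(k+2)}(\mathbf{1}-A\widehat{x}^{(s)})$; a standard primal recovery argument for accelerated dual methods (Cauchy--Schwarz on $\langle \lambda^\bullet,\mathbf{1}-A\overline{x}^{(k)}\rangle$ combined with the sub-optimality bound) yields $\|[\mathbf{1}-A\overline{x}^{(k)}]_+\|\leq 16L\|\lambda^\bullet\|/(k+1)^2$. Since $e_i^{(k)}$ is exactly the $i$-th entry of $[\mathbf{1}-A\overline{x}^{(k)}]_+$, the claim follows.

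The main obstacle is the bookkeeping for the constant $16L$ in~\eqref{eqn-lemma-2}: it requires tracking the interplay between Nesterov's $4/(k+1)^2$ contraction, the choice $\alpha=1/(2L)$ (which introduces an additional factor of two relative to the usual $1/L$ step), and the Cauchy--Schwarz step that converts the weighted dual gap into a bound on $\|[\cdot]_+\|$. The upper bound in~\eqref{eqn-lemma-1} also demands care: it hinges on canceling the cross-term $(\lambda^\bullet)^\top(\mathbf{1}-Ax^\bullet)=0$ against the weighted residual, which only works because $x^\bullet$ is \emph{exactly} primal optimal, so that complementary slackness is available to kill the term with indeterminate sign.
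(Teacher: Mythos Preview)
The paper does not prove this lemma at all; it is quoted verbatim from Theorems~2.9 and~2.10 of~\cite{Necoara}, with only the statement adapted to the present setting. Your proposal is therefore a reconstruction of the cited argument rather than a comparison, and your outline is indeed the standard one: identify Algorithm~1 as Nesterov's accelerated scheme on $-\mathcal{D}$ with step $\alpha=1/(2L)$, write $\overline{x}^{(k)}$ as the ergodic average $\sum_{s=0}^{k}\frac{2(s+1)}{(k+1)(k+2)}\widehat{x}^{(s)}$, and exploit the Lagrangian identity $\mathcal{D}(\lambda^{(s)})+\langle\nabla\mathcal{D}(\lambda^{(s)}),\lambda-\lambda^{(s)}\rangle=f(\widehat{x}^{(s)})+\lambda^\top(\mathbf{1}-A\widehat{x}^{(s)})$ inside the estimate sequence.

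One technical correction to your route for~\eqref{eqn-lemma-1}: the single-iterate gap bound you display, $\frac{(k+1)(k+2)}{4}[\mathcal{D}(\lambda^\bullet)-\mathcal{D}(\lambda^{(k)})]\leq\frac{1}{4\alpha}\|\lambda^\bullet\|^2$, is not the inequality that does the work, and complementary slackness $(\lambda^\bullet)^\top(\mathbf{1}-Ax^\bullet)=0$ is not needed. The clean argument keeps the full estimate-sequence inequality with the $\max_{\lambda\geq 0}$ intact, substitutes the Lagrangian identity above, and then simply evaluates at $\lambda=0$ to drop the residual term; this yields $\mathcal{D}(\mu^{(k)})\geq\sum_s w_s f(\widehat{x}^{(s)})\geq f(\overline{x}^{(k)})$, and weak duality $\mathcal{D}(\mu^{(k)})\leq f(x^\bullet)$ finishes it with no cancellation required. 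Your plan for~\eqref{eqn-lemma-2} is on target: plug a perturbation such as $\lambda=2\lambda^\bullet$ (or $\lambda^\bullet$ shifted along $e^{(k)}/\|e^{(k)}\|$) into the same estimate-sequence inequality, use the saddle-point property of $(x^\bullet,\lambda^\bullet)$, and the constant $16L$ falls out after the bookkeeping you anticipate.
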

As a side comment,
fixing arbitrary~$k$ and assuming $\overline{x}^{(k)} \neq x^\bullet$,
 \eqref{eqn-lemma-1} means that $\overline{x}^{(k)}$ does not satisfy at least one of
the constraints $\sum_{j \in \Omega_i}\overline{x}^{(k)}_j \geq 1$, $i=1,...,N$
 (though the constraint violations all converge to zero as $k \rightarrow \infty$).
We are now ready to prove Theorem~\ref{theorem-1}.

\begin{proof}[Proof of Theorem~\ref{theorem-1}]
 Consider Algorithm~1, and fix arbitrary $k \geq 0.$
 Note that $x^{(k)} =
 \overline{x}^{(k)} + e^{(k)}$, and is by construction feasible to~\eqref{LP}.
 Indeed, for any $i$,
 $x_i^{(k)}$ is clearly non-negative. Also, as
 $e^{(k)}_j \geq 0$, $\forall j$, we have:
 $
 \sum_{j \in {\Omega_i}} x_j^{(k)}$ $\geq$
 $e^{(k)}_i +  \sum_{j \in {\Omega_i}} \overline{x}_j^{(k)} $
 $=$
 $\left[1-\sum_{j \in {\Omega_i}} \overline{x}_j^{(k)}  \right]_+$
 $ + \sum_{j \in {\Omega_i}} \overline{x}_j^{(k)} $
 $\geq$
 $\left(1-\sum_{j \in {\Omega_i}} \overline{x}_j^{(k)}  \right)$
 $+ \sum_{j \in {\Omega_i}} \overline{x}_j^{(k)} = 1.$
 %
 %
  Next, adding
 $\mathbf 1^\top e^{(k)}$ to both sides of~\eqref{eqn-lemma-1},
 and using $\frac{\delta}{2}\|\overline{x}^{(k)}\|^2 \geq 0$:
  $
 \mathbf 1^\top x^{(k)}$ $
 \leq
 \mathbf 1^\top x^\bullet + \mathbf 1^\top e^{(k)} $ $ + \frac{\delta}{2}\|x^\bullet\|^2.
  $
 Subtracting $\mathbf 1^\top x^\star$
  from both sides of this inequality,
   using
  $\mathbf 1^\top e^{(k)} \leq \sqrt{N}\|e^{(k)}\|$,
  and~\eqref{eqn-bounds-4}:
 $
  \mathbf 1^\top x^{(k)} $$- \mathbf 1^\top x^\star
 \leq$$
 \frac{\delta}{2}(\mathbf 1^\top x^\star )+ \sqrt{N} \|e^{(k)} \| .
  $
  %
%
The result now follows by applying \eqref{eqn-lemma-2},
using \eqref{eqn-bounds-on-x-norm} and \eqref{eqn-bounds-on-lambda-norm}, dividing the resulting
inequality by $\mathbf 1^\top x^\star$,
and using the left inequality in~\eqref{eqn-bounds-on-sum-x}.
\end{proof}
\vspace{-3mm}
\section{Simulation example}
\label{section-simulations}
This Section illustrates the performance of our
distributed algorithm for coded storage allocation and
compares it with the stateless distributed solver proposed in~\cite{Stateless}.
This is an efficient, easy to implement representative of existing \emph{distributed} methods~\cite{Luby,GlobalOptLocalInf,MixedPackingCovering}. We remark that the method in~\cite{Stateless}
is stateless, while ours is, like~\cite{Luby,GlobalOptLocalInf,MixedPackingCovering}, not stateless.

The simulation setup is as follows. The network of storage nodes is a geometric random graph.
Nodes are placed uniformly at random over unit square, and the node pairs within distance
$0.4$ are connected with edges. We consider two different values
of $N$: $N=100; 400$, and two different values of the target accuracy $\epsilon$:
$\epsilon = 1; 0.1$. Reference~\cite{Stateless} assumes
beforehand the knowledge of $N$ and $\epsilon$, not $d_{\mathrm{max}}$. Hence, for a
fair comparison (and some loss of our method), with our algorithm we replace $d_{\mathrm{max}}$ with $N$ (upper bound on $d_{\mathrm{max}}$)
 and set $\alpha = \frac{\delta}{2 N^2}$. Now, with both algorithms, their
 parameters depend on $N$ and $\epsilon$ only; we set them such that the guaranteed achievable
 accuracy with our method is $1+\epsilon/2$, and with~\cite{Stateless} is $1+6\epsilon$.
  This is in favor of~\cite{Stateless}, as faster convergence is achieved for lower required accuracy.
 We then look at how many per-node communications each algorithm requires to achieve
 the $1+6\epsilon$ accuracy. With both methods, we initialize
 the allocation to $x_i^{(0)}=1$, $\forall i$. With our method,
 we achieve this implicitly by initializing $\lambda_i^{(0)}=0$, $\forall i$.

 Theory predicts that, for a fixed $N$, our method performs better for
 sufficiently small $\epsilon$, while~\cite{Stateless} performs
 better for a sufficiently large $N$, for a fixed $\epsilon$. (Compare complexities
 $O(N^{3/2}/\epsilon)$ versus $O((\log N)^4/\epsilon^5)$.)
  Simulation examples show that, for moderate-size networks ($N = 100; 400$)
  our method is significantly faster already at coarse required accuracies ($\epsilon=0.1$).
  This is illustrated in Figure~1, which plots
  the relative error $(\mathbf 1^\top x^{(k)}-\mathbf 1^\top x^\star)/(\mathbf 1^\top x^\star)$ versus elapsed number of per-node scalar communications with the two methods. (Computational cost
  per communication with the two methods is comparable.)
   For $\epsilon=1$, $N=100$ (top left Figure), the algorithms are comparable; with the increase of $N$ to $400$,
   \cite{Stateless} becomes slightly better (bottom left Figure). However, for $\epsilon=0.1,$
   our method performs significantly better for both values of $N$ (See the two right Figures.)

 \begin{figure}[thpb]
      \centering
      \includegraphics[height=1.4 in,width=1.5 in]{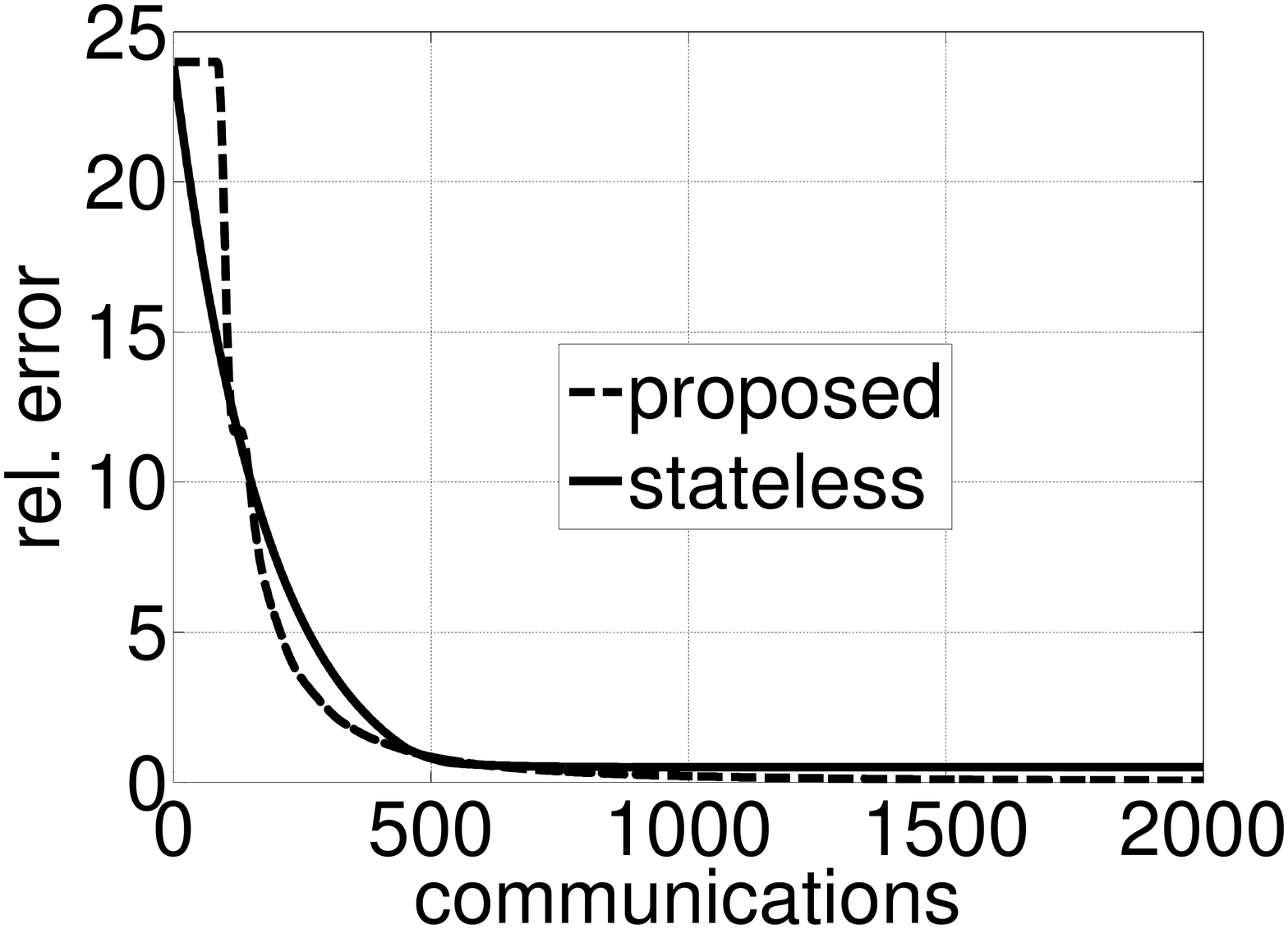}
       \includegraphics[height=1.4 in,width=1.5 in]{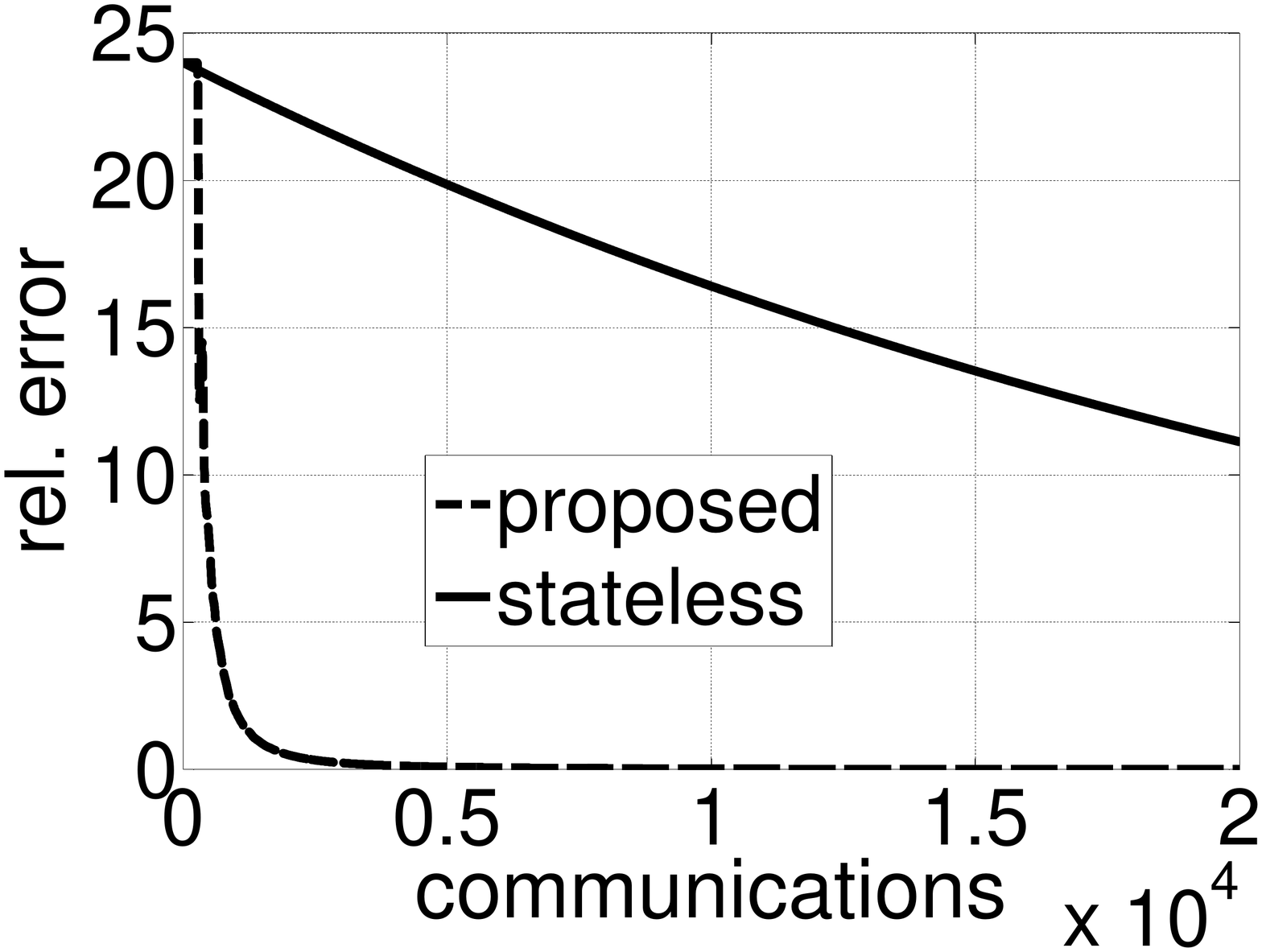}
       \includegraphics[height=1.4 in,width=1.5 in]{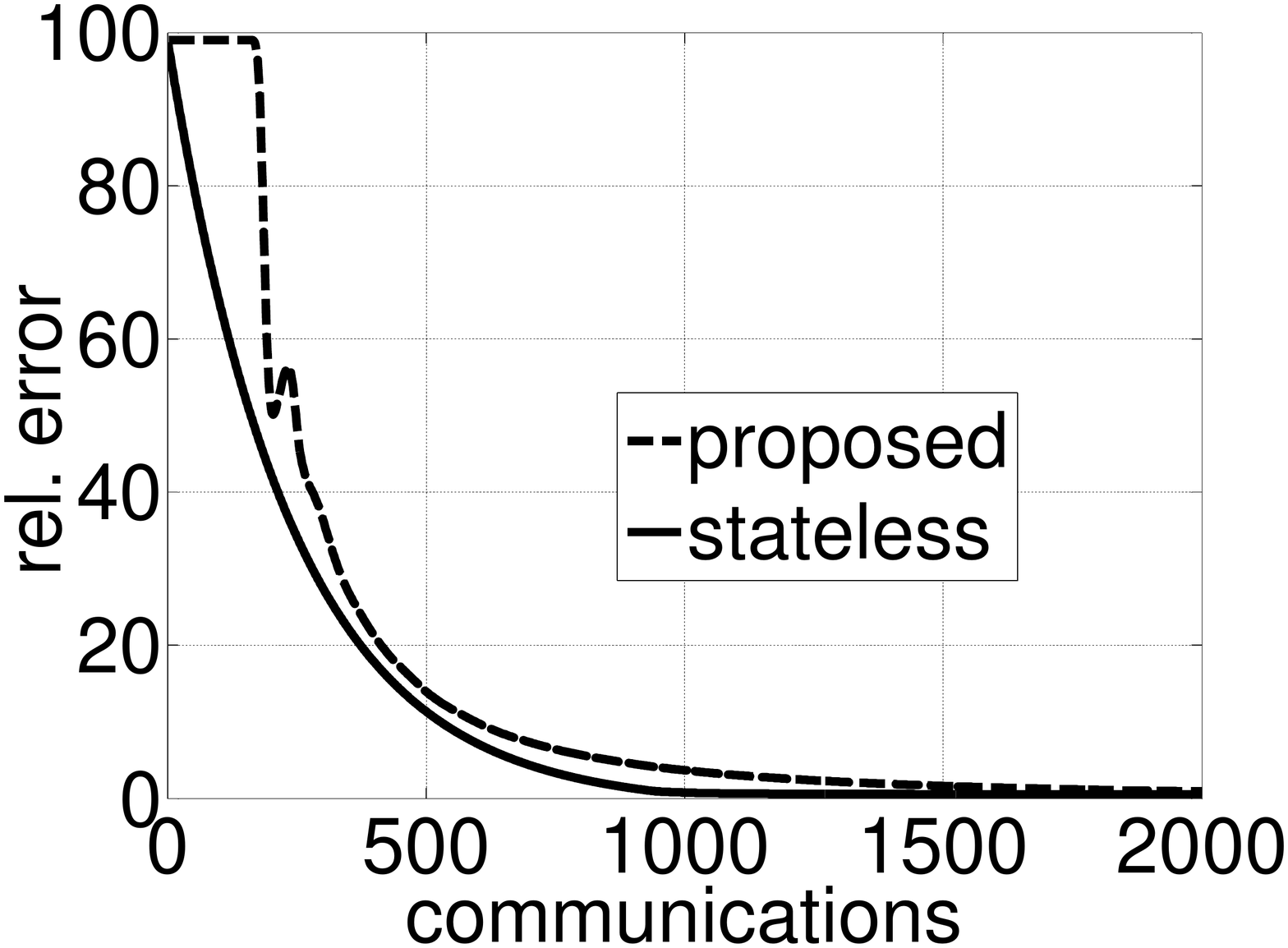}
       \includegraphics[height=1.5 in,width=1.6 in]{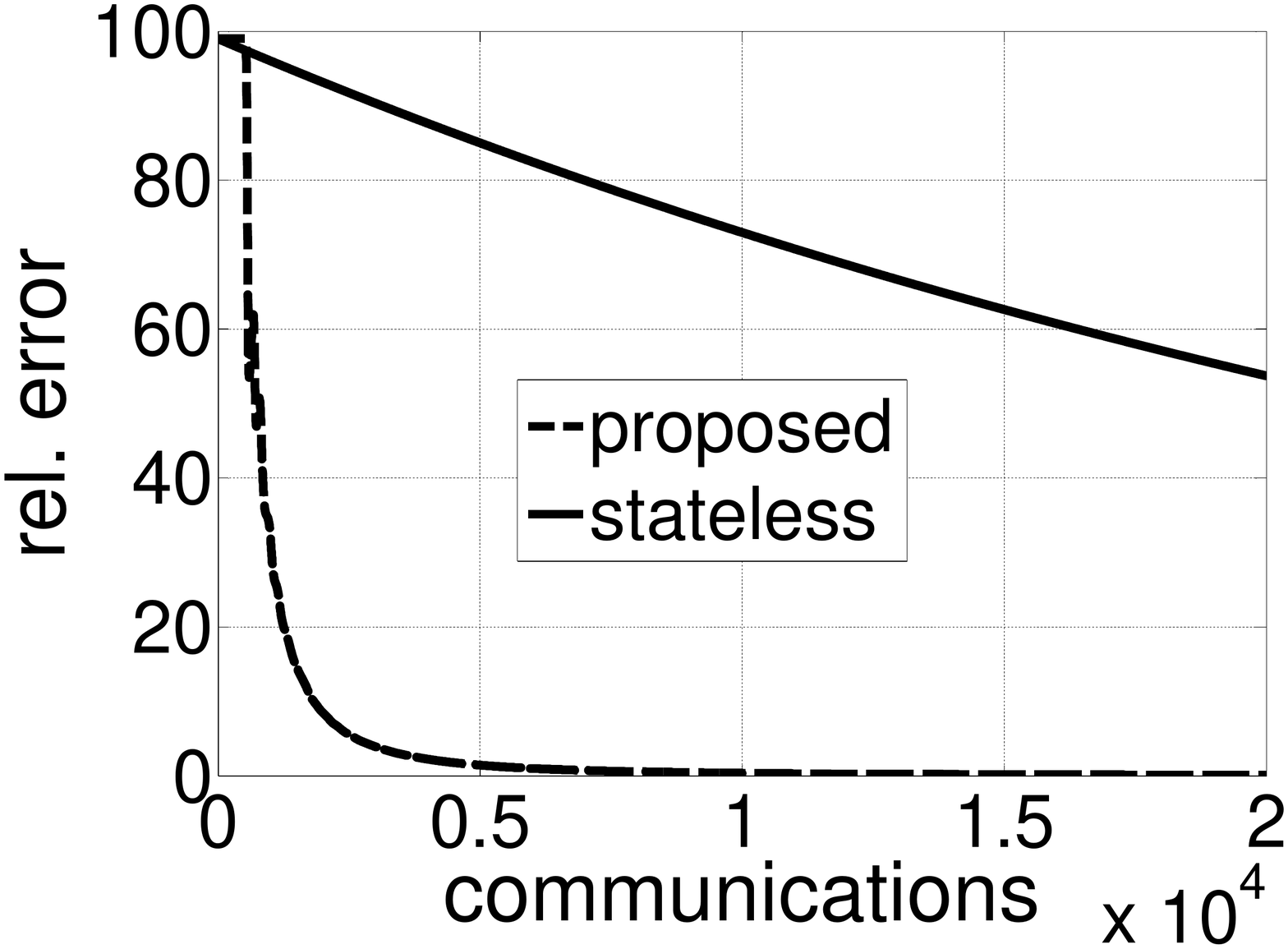}
      \caption{Relative error $(\mathbf 1^\top x^{(k)}-\mathbf 1^\top x^\star)/(\mathbf 1^\top x^\star)$ versus per-node communications along iterations for the
      proposed algorithm (dashed line) and the stateless algorithm in~\cite{Stateless} (solid line).
      Top left: $N=100, \epsilon=1$; top right: $N=100, \epsilon=0.1$;
       bottom left: $N=400,\epsilon=1$; bottom right: $N=400, \epsilon=0.1$.}
      \label{figure-1}
      \vspace{-7mm}
   \end{figure}

\vspace{-2mm}
\section{Conclusion}
\label{section-conclusion}
We introduced a new, neighborhood-based data access model for distributed coded storage allocation where
storage nodes are interconnected through a generic network. A user
accesses a randomly chosen storage node, and attempts
a recovery based on the storage available in the neighborhood set of the accessed node.
We formulate the problem of optimally allocating the coded storage
such that the overall storage is minimized while probability one recovery is guaranteed.
We show that the problem reduces to solving the fractional dominating set problem
over the storage node network. Next, we address the problem of designing an efficient fully distributed
algorithm to solve the coded storage allocation problem. While existing work did not focus
on Lagrangian dual methods, we apply and modify the dual {proximal center} method. We establish complexity of the method in terms of the desired accuracy and the underlying network
 and demonstrate its efficiency by simulations.

\vspace{-3.mm}
\bibliographystyle{IEEEtran}
\bibliography{IEEEabrv,bibliography_new}

\end{document}